\documentclass[preprint]{iucr}              

     \journalcode{J}              

\usepackage{verbatim}
\usepackage{color,soul}
\usepackage{amsmath}

\usepackage[utf8]{inputenc}
\usepackage[english]{babel}
\usepackage{amsthm}
\newtheorem{theorem}{Theorem}[section]

\newtheorem{lemma}[theorem]{Lemma}
\theoremstyle{definition}
\newtheorem{definition}{Definition}[section]

\usepackage{amssymb}

\begin{document}                  



\title{Super-resolution SAXS based on PSF engineering and sub-pixel detector translations }


\author[a]{Benjamin}{Gutman}
\author[b]{Michael}{Mrejen}
\author[a]{Gil}{Shabat}
\author[b]{Ram}{Avinery}
\author[a]{Yoel}{Shkolnisky}
\cauthor[b]{Roy}{Beck}{roy@tauex.tau.ac.il}

\aff[a]{School of Mathematics Sciences}
\aff[b]{the Raymond and Beverly Sackler School of Physics and Astronomy, Tel Aviv University, Tel Aviv 69978, Israel}









\maketitle                        

\begin{synopsis}
We develop a two-step reconstruction methodology to enhance the angular resolution for  given  experimental  conditions.  Using  minute  hardware  additions,  we  show  that translating  the  x-ray  detector  in  subpixel  steps  and  modifying  the  incoming  beam shape results in a set of 2D scattering images which is sufficient for super-resolution SAXS  reconstruction. 
\end{synopsis}

\begin{abstract}

Small-angle X-ray scattering (SAXS) technique enables convenient nanoscopic characterization for various systems and conditions. Nonetheless, lab-based SAXS systems intrinsically suffer from insufficient x-ray flux and limited angular resolution. Here, we develop a two-step reconstruction methodology to enhance the angular resolution for given experimental conditions. Using minute hardware additions, we show that translating the x-ray detector in subpixel steps and modifying the incoming beam shape results in a set of 2D scattering images which is sufficient for super-resolution SAXS reconstruction. The technique is verified experimentally to show above 25\% increase in resolution. Such advantages have a direct impact on the ability to resolve faster and finer nanoscopic structures and can be implemented in most existing SAXS apparatuses.
\end{abstract}

\section{Introduction}

The scientific and applicative revolution in nanotechnology over the past decade requires nanoscale characterization techniques, which are versatile and technologically challenging. Small-angle X-ray scattering (SAXS) is an established technique that is reviving and flourishing in the last decade. Traditionally, SAXS was used to characterize nanoscale objects at low-resolution in various media in a non-destructive way \cite{KORNREICH2013716,blanchet2013small,Jacoby2015,Safinya2013,Ben-Nun2010,Lipfert}. Most SAXS experiments are currently conducted in specialized synchrotron facilities, that harness the high-flux of x-ray photons to provide adequate scattering signal. Such experiments are not accessible to most industries and scientists, as synchrotron-based experiments are commonly over-booked world-wide. 

However, recent advances in small scale lab-based x-ray sources and, in particular, in high-efficiency solid-state 2D, x-ray detectors enable conducting many of the experiments using lab-based SAXS systems, resorting to synchrotron experiments in only specific cases. Currently, there are a handful of well-established SAXS suppliers that provide systems that can measure structural information in a variety of disciplines. Nevertheless, since the pixel size of most x-ray detectors is relatively large, current SAXS systems require significant floor space. A large sample to detector distance (typically held in a vacuum) is required to achieve high angular resolution in the small-angle region.

For a point source, the SAXS signal, $I(q)$, measured at a sample to detector distance $d_s$, and at position $\Delta r$ from the unscattered direct beam position, is a Fourier transform squared of the sample's electron density. Here, $q = 4\pi \sin{\theta}/\lambda$ is the scattering wave vector, $\lambda$ is the X-ray wavelength, and $\theta$ is the scattering angle. From geometrical consideration,  $\tan 2\theta=\Delta r / d_s$. Therefore, enhanced angular resolution requires either large $d_s$ or excellent sampling of $\Delta r$ facilitated by a small pixel size of the X-ray camera.        

Moreover, similar to conventional imaging techniques, the measured SAXS intensity, $I_m$, is a convolution of the sample scattering intensity with the point-spread function (PSF) of the system, namely:
\begin{equation}\label{eq:PSF_conv}
    I_m(\langle q \rangle)=\int P(q,\langle q \rangle)*I(q)dq.
\end{equation}

Here, $\langle q \rangle$ is the average scattering vector corresponding to the setting of the instrument, and $P(q,\langle q \rangle$) is the PSF determined by the wavelength spread, finite direct beam collimation, and the detector's resolution. It is worth noting that for small angles ($q \rightarrow 0$), the collimation-related PSF, which depends on the actual system configuration, can be measured directly from the shape of the direct beam \cite{Pedersen:1990el}. 

From Eq. \ref{eq:PSF_conv}, it is evident that smaller PSF will yield a better resolution. However, due to limited flux, this entails a significant increase of the time needed for the experiment in order to achieve sufficient signal to noise ratio (SNR).  

Nowadays, PSF deconvolution in SAXS data is performing poorly  \cite{Vad:ks5259}. In most cases, calculated SAXS models are convoluted with measured or guessed PSF to fit the experimental data \cite{Hua_2017}. This procedure is not ideal as it introduces bias into the reconstructed model. Recently, interactive procedures have been proposed to de-smear $I_m(q)$ and obtain the pure scattered intensity data $I(q)$. However, these methods had limited success in noisy conditions \cite{Vad:2010hu}.

Here, we propose a two-step technique that improves the angular resolution by more than 25\%. Data recording and analysis schemes are tested on existing SAXS apparatuses. We show that minor hardware modifications are allowing to achieve super-resolution SAXS (srSAXS) and to probe nanoscale materials in lab-based systems. Our approach is based on the implementation and integration of translation based subpixel reconstruction and multi-PSF engineering and reconstruction algorithms. 

For the first phase, inspired by super-resolution microscopy \cite{UR1992181}, we control and monitor the translations of the X-ray detector. We use a sub-pixel reconstruction method based on \cite{Farsiu:2004}, and achieve smaller effective pixel size on a commercial 2D X-ray detector. 

In the second phase, motivated by the new solid-state x-ray detectors and recent development of scatterless slits design \cite{Li:2008cx}, we control measure and analyze the direct beam profile to estimate the blurring PSF. Following, we measure the scattering profiles with different PSFs and reconstruct an optimized X-ray diffraction pattern that resolves SAXS data with superior resolution.

Future applications of the suggested advances can have a tremendous impact on various aspects and industries as it will provide top-of-the-line nanoscale characterization techniques of quality comparable with that of synchrotron-based experiments.
The proposed lab-based srSAXS system allows us to measure smaller angles for a given sample-to-detector distance or to dramatically reduce the sample-to-detector distance for similar small-angle separation resolution and, therefore, extensively improve existing lab-based SAXS resolution and dynamic-range capabilities. 

\section{Methods}

\subsection{Synthetic data generation} \label{Synthetic_data_generation}
Three premises are taken in order to generate synthetic scattering images: (a) the ground truth (GT) scattering pattern has a circular symmetry with a Lorentzian profile. (b) Photon measured by the detector has Poisson statistics. Furthermore, (c) some spurious background scattering pattern exists, arising regardless of the sample measured, such as scattering of the beam stop, of the sample holder, the direct (unscattered) beam, cosmic radiation, etc. Summing the above, we place two close Lorenzian circles, add background signal that is reciprocal to the distance from the center (i.e., $q$), place beam-stop for the singularity in the center and run it through Poisson process:

\begin{equation}
\label{eq:data_simulation}
    I[x, y] = poissrnd(t \cdot I_{GT}[x, y]).
\end{equation} 
Here, $poissrnd$ is implemented using Matlab for random Poisson generator, $t$ is the simulated exposure time, and $I_{GT}$ is the ground-truth scattering profile. 

For simplicity, we model the PSF as a convolution of point Gaussian source and the rectangular shape of scatterless slits \cite{Li:2008cx}. Since the convolution of Gaussian and rectangle is a sum of two error-functions, the PSF is modeled as such. Finally, the different PSFs were convoluted with the GT image in order to create synthetic images. Additional details are given in the appendix \ref{app:syn_PSF}.

\subsection{Samples preparation}
Commercial AgBh powder (Thermo Fisher Scientific) was used without any further purification. 1,2-dioleoyl-sn-glycero-3-phosphoethanolamine (DOPE) was purchased from Avanti Polar Lipids Inc. The lipids were dissolved in water (DDW), total lipid concentration was 30 mg/ml per sample. Samples were homogenized using a vortexer for 5 minutes at 3000 RPM. Samples were then placed in quartz capillaries (1.5 mm in diameter), containing about $40 \mu l$.

\subsection{SAXS measurement setup}

Measurements were performed using a lab-based X-ray scattering system, with a GeniX (Xenocs) low divergence Cu $K_{\alpha}$ radiation source (wavelength of $\lambda=1.54$\AA) and a scatterless slits setup \cite{Li:2008cx}. Samples were measured at distance of $d_s = 117 mm$ using Pilatus 300K detector (Dectris). The detector, sample stage, and slits were motorized using stepper motors with a positioning accuracy of 1 $\mu m$ and controlled by SPEC software.

\section{Results and discussion}

In SAXS instruments, a monochromatic beam of X-rays illuminates a sample from which some of the X-rays scatter, while most go through the sample without interacting with it. The scattered X-rays are collected by a camera (typically a 2D flat X-ray detector) situated behind the sample. However, most available lab-based X-ray sources produce divergent beams and thus rely on collimating the direct beam. In our SAXS system, we control the beam shape and flux using scatterless slits design\cite{Li:2008cx}, allowing the engineering of the response function of the system.

Our super-resolution method is composed of two consecutive phases:(a) subpixel translation and (b) multi-PSF engineering. Below, we will show that for a budgeted experimental time, a specific experimental protocol results in superior angular resolution. We compare our results to alternative measurement protocols.    

\subsection{Subpixel sampling}
X-ray detectors suffer from relatively large pixel dimensions. For a given experimental setup, this significantly limits the achievable angular resolution, in particular at the smallest angles. 
To reconstruct the scattering with subpixel resolution, we conduct several measurements of a given sample, each time with a different position of the camera. Between consecutive recordings, we translated the detector position with subpixel intervals. For a given pixel size ($l$), we define the resolution enhancement factor ($f$), for which the sub-pixel translation is of $l/f$. This results in overlapped pixels in the registered images, where each pixel contains partial information about the higher resolution data. 

To extract the high-resolution image, we first find the actual relative displacement between the different recordings using the Lucas-Kanade algorithm \cite{Lucas:1981}, and then apply a super-resolution (SR) algorithm for proper image fusion \cite{Farsiu:2004}.

We applied this approach to synthetic data modeling realistic noise and X-ray diffraction pattern (Figure \ref{fig:SR_synthetic}) with $f=3$ (9 translations on 3x3 grid). The results demonstrate an improved resolution and our ability to decipher two nearby peaks otherwise undetectable given the original pixel dimension. 

Intuitively, increasing the number of recordings with smaller subpixel translation will yield a higher resolution. However, provided a budgeted time for a given experiment, there is a trade-off between the possible improved resolution and the SNR for each recording. Moreover, the application of the super-resolution method for a larger translated grid is computationally expensive, and with limited benefits in the reconstructed 2D image. 

\begin{figure}
\includegraphics[width=0.9\textwidth]{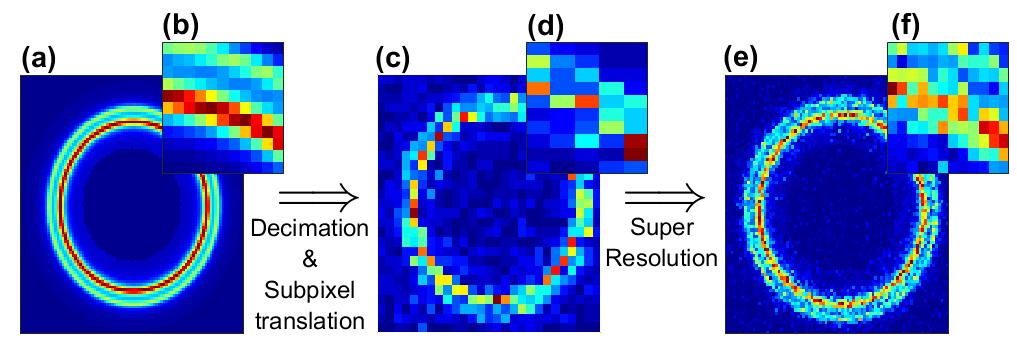}
\caption{Super-resolution process for synthetic image: (a) A synthetic high resolution ground truth (GT) of two nearby circular peaks, (b) down-sampled translated image of the GT for realistic measurement recording, and (c) super-resolution reconstruction. Insets show zoomed-in images.}
\label{fig:SR_synthetic}
\end{figure}

Motivated by the synthetic results, we gathered experimental data of X-Ray scattering patterns in a short sample to detector distance ($d_s=117 mm$). Typically, SAXS is measured with an order of magnitude larger $d_s$. Our test samples are a Silver behenate (AgBh) powder, showing lamellar scattering signals, and DOPE phospholipids in solution, showing self-assembled inverted hexagonal phase. Both samples display isotropic scattering rings conveniently analyzed in the azimuthally integrated signal with respect to the direct beam center. 

Each of the measured samples provides a different perspective of the super-resolution reconstruction. AgBh allows investigating the sharpening of scattering features at small angles. On the other hand, the DOPE sample has two close scattering peaks giving a more tangible appreciation of the resolving power improvement of the reconstruction algorithm (Figure \ref{fig:true_SR}). 

We notice that the direct beam profile, and therefore the system's PSF have a significant effect on the ability to achieve higher resolution reconstructions. We define separation criteria of two nearby peaks, such as $q_{H(1,1)}$ and $q_{H(2,0)}$ in the DOPE scattering, by:
\begin{equation}
\label{eq:resolution_crit}
    \delta=\frac{I_p-I_v}{\Delta q}.
\end{equation} 

Here, $I_p$ and  $I_v$ are the intensities of lower peak and valley between the peaks, respectively, and $\Delta q$ is the distance between them.   
For example, for beam profile of $0.6\times0.6$ $mm^2$ results in $\delta_{0.6\times0.6} = 0.60$, while for larger beam profile ($1\times1 mm^2$) $\delta_{1\times1} = 0$ since no distinction of close peaks is observed.

In Figure \ref{fig:true_SR}, we present the results of implementing the SR algorithm on AgBh and DOPE measurements. The samples were taken with f=3 and with various PSFs. We find that the main resolution enhancement contribution is in denoising the image and reducing the width of the circle in the AgBh measurement: The width of the central circle is reduced by $\sim 10\% $ in all examined measurements while the $\delta$ criteria (Eq. \ref{eq:resolution_crit}) showed no significant improvement.

\begin{figure}
\includegraphics[width=0.99\textwidth]{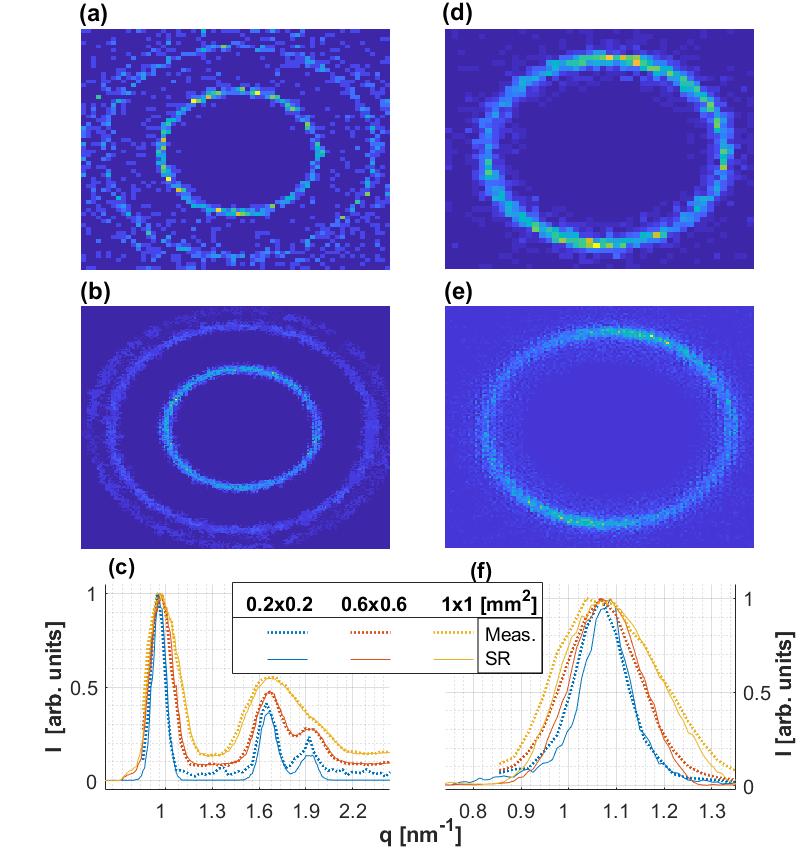}
\caption{Resolution performance of the SR algorithm for (a-c) DOPE and (d-f) AgBh samples. 
(a, d) Measured  with $0.2 \times 0.2 mm^2$ beam profile.
(b, e) Result of the SR algorithm with $0.2 \times 0.2 mm^2$ beam profile. 
(c, f) 1D azimuthal integration SAXS profiles and reconstructions with 3 different beam sizes. }
\label{fig:true_SR}
\end{figure}

\subsection{Constrained Multi-Deconvolution (CMD)}
\label{sec:CMD}

\begin{figure}
\includegraphics[width=0.99\textwidth]{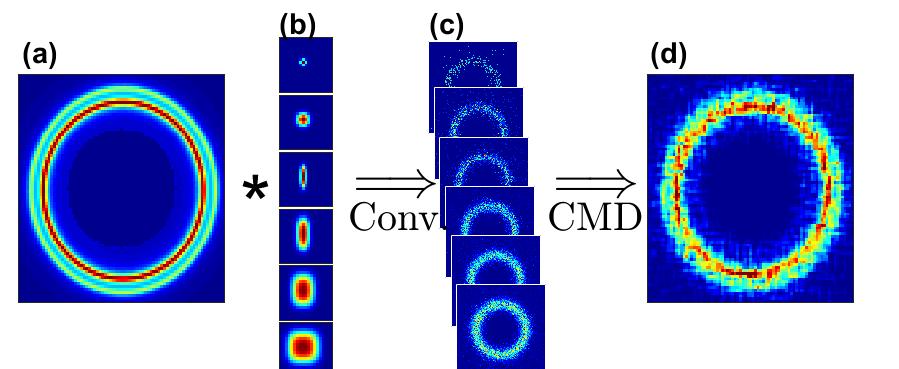}
\caption{CMD process for synthetic data. (a) Synthetic data used to simulate ground truth (GT),
(b) Synthetic PSF used to blur the GT. (c) Convoluted images representing blurred measurements. (d) The resulting reconstructed image after the CMD algorithm.}
\label{fig:CMD_synthetic}
\end{figure}

The second most dominant resolution limiting factor is the finite size of the direct X-ray beam. While smaller beams will result in a higher angular resolution, it will impose much longer exposure times in order to achieve comparable SNR. Constrained by insufficient flux at smaller beam profiles and resolution at larger ones, we develop an algorithm that takes advantage of both. 
Our new reconstruction algorithm, coined as Constrained Multi-Convolution (CMD), is based on measurements with different PSFs on the same sample. For a set of $m$ measured images, $Y_i$, and their corresponding measured PSF - $P_i$, where $i=1,...,m$, our goal is to restore the image as if the PSF was a point source (i.e., delta function). The outcome of such a process will be a de-blurred and de-noised image (Figure \ref{fig:CMD_synthetic}). 

Formally, we model each measured image as a convolution of the "real" image, X, we wish to restore and a PSF: 
\begin{equation}
    Y_{i}=X*P_i.
\end{equation}

A direct least squares solution can be formulated as following:
\begin{equation}\label{eq:cmd_first}
    \hat{X}=\arg\min_{X}\sum_i||Y_i-P_i*X||_F^2,
\end{equation}

where $||.||_F$ refers to the Frobenius norm: $||A||_F= \sqrt{\sum_{i, j} A_{i, j}^2}$. 

One way of dealing with the resolution to SNR trade-off is phrasing a weight function that controls the significance of each image $Y_i$ in Eq. \ref{eq:cmd_first}. The weight function should depend on the underlying scattering image, as the following thought experiment demonstrates. For a scattering pattern requiring high resolution (e.g., delta function in q-space), large PSF will blur the image and, therefore, must not be weighted highly into the algorithm. On the other hand, if the resolution of the available PSF is of the underlying scattering signal (or finer) than the small PSF images will only suffer from low SNR and thus will corrupt the reconstruction process. 

We therefore denote by $\sigma$, a weight function, and reformulate the CMD reconstruction algorithm to:

\begin{equation}\label{eq:CMD_second}
    \hat{X} = \arg\min_{X}{\sum_i\frac{||Y_i-P_i*X||_F^2}{2\sigma_i^2}+\lambda||X||_F^2}.
\end{equation}

Here, in addition to $\sigma$, we found that a ridge regularization is required in order to ensure a numerically stable solution. This regularizing term takes the form of balancing the cost function with the target matrix norm \cite{Golub:2013}.

Representing the convolution by the corresponding matrix multiplication, Eq. \ref{eq:CMD_second} can be solved as a linear system. However, the corresponding matrices are much larger than the actual image, which results in a heavy processing and computational load. For example, convolution of $n\times n$ sized image with a $m\times m$ sized PSF transforms to an $(m+n-1)\times(n^2)$ matrix. 
In order to alleviate the computational load, let us rephrase Eq. \ref{eq:CMD_second} in operator notation. 

For a set of measured PSFs, we define $\mathcal{P}$ to be an operator acting on images by $\mathcal{P} = [P_1* \quad P_2* \quad...\quad P_m* \quad \sqrt{\lambda}I]^T$.
In a similar way, we stack the measured images as $\quad \mathcal{Y} = [Y_1 \quad Y_2 \quad ... \quad Y_m \quad 0]^T$.

In this notation, Eq. \ref{eq:CMD_second} becomes

\begin{equation} \label{eq:CMD_operator}
    f(X) = ||D(\mathcal{P}X - \mathcal{Y}||_F^2 = \left\Vert D\left( \begin{array}{cc} P_1*X & - Y_1 \\ P_2*X & - Y_2 \\ \vdots & \vdots \\ P_m*X & -Y_m \\ \sqrt\lambda I & 0 \end{array} \right) \right\Vert_F^2 .
\end{equation}

Here, $D$ is a diagonal matrix representing $\sigma$ from Eq. \ref{eq:CMD_second}: 
\begin{equation} \label{eq:D_operator_1}
D = \text{diag}(\sigma_1 I \quad \sigma_2 I \quad ... \quad \sigma_m I \quad I).
\end{equation}

Differentiating Eq. \ref{eq:CMD_operator} with respect to $X$ we get the normal equations 
\begin{equation} \label{eq:f_grad}
\bigtriangledown{f} = (D\mathcal{P})^T D(\mathcal{P}X - \mathcal{Y}),
\end{equation}
where $ \mathcal{P}^T$ is the conjugate operator in the Frobenius inner product notation. Requiring $ \bigtriangledown{f} = 0$ and using the diagonal form of D, we find our optimum in the form of:

\begin{equation} \label{eq:normal_equation}
   \mathcal{P}^T D^2 \mathcal{P}X = \mathcal{P}^T D^2 \mathcal{Y}
\end{equation}

This equation is solved using the conjugate gradients method, detailed in Appendix \ref{apn:conjugate_proff}. We applied the CMD approach to synthetic data, modeling realistic noise, and X-ray diffraction pattern (Fig. \ref{fig:CMD_synthetic}c).

\begin{figure}
\includegraphics[width=0.95\textwidth]{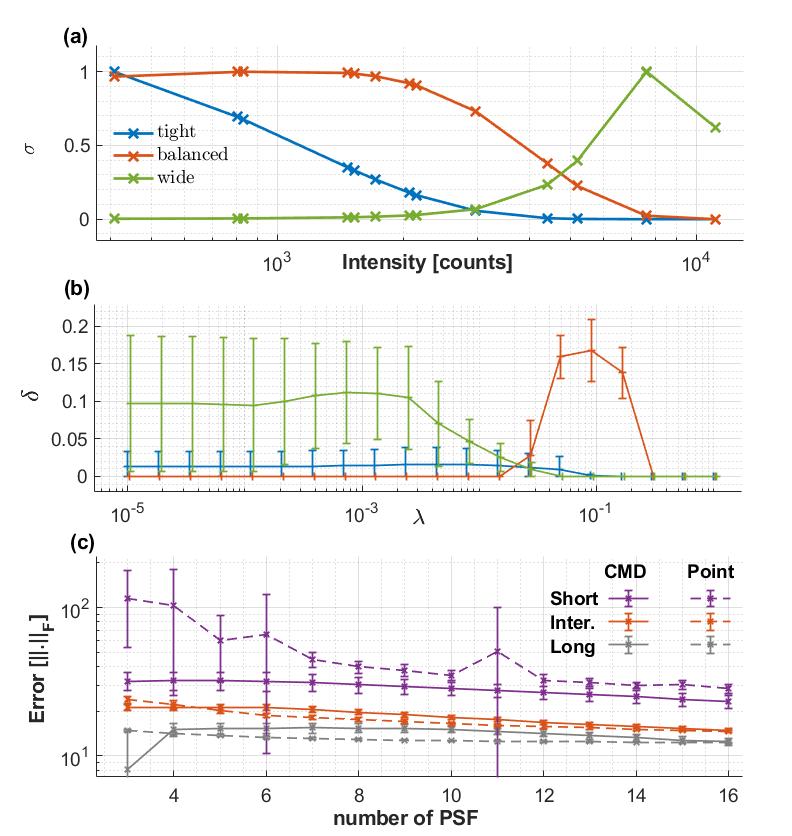}
\caption{
Reconstruction parameters optimization. 
(a) Three different $\sigma$ weight functions examined for various total intensity in the simulated data: emphasizing tight (blue),  balanced (red), and wide (green) PSFs. Total intensity is correlated with specific PSF area.  
(b) Performance of the CMD algorithm ($\delta$, defined in Eq. \ref{eq:resolution_crit}) for various regularization factors ($\lambda$) and $\sigma$ weight functions. 
(c) Reconstruction's error for CMD with increasing number of PSFs taken in account (solid lines) using the optimal $\lambda = 0.1$ and balanced $\sigma$ function (red curve in a). Reconstructed scoring (y-axis), is calculated using the difference of the reconstruction from the ground-truth image (Fig. \ref{fig:CMD_synthetic}a) in Frobenius norm. For comparison, a single point-like PSF image (labeled as 'point') with equivalent exposure is presented in dashed lines. The different lines represent short, intermediate and long exposure times simulation with t = 0.1, 0.8 and 4 respectively, as defined in Eq. \ref{eq:data_simulation}. Error bars are standard deviation of 10 independent simulations and reconstructions.}
\label{fig:CMD_params}
\end{figure}

An optimal CMD performance depends on several factors: $\lambda$, the duration of the experiment, the number and geometry of PSFs used by the CMD, and  $\sigma$ for each PSF. From our experience, while bounds to these parameters can be obtained in general, optimal reconstruction parameters depend on the sample's scattering under investigation. In the following, we will demonstrate that with optimized parameters, CMD outperforms measurements with the smallest PSF at equivalent an total measurement time.

We generated synthetic X-ray scattering data with two nearby scattering rings (Fig. \ref{fig:CMD_synthetic}a) and demonstrate the relation between the reconstruction parameters and the resulting resolution (Fig. \ref{fig:CMD_params}). For three different $\sigma$ functions (Fig. \ref{fig:CMD_params}a), reconstruction's resolution has significantly changed with optimal $\lambda$ (Fig. \ref{fig:CMD_params}b). In this example, for optimal reconstruction the weights of smaller PSF's are elevated in comparison to larger ones. Moreover, for a fixed set of PSFs, we find mild dependence between the exposure time (or SNR) and $\lambda$ resulting with the best reconstruction parameters (i.e. larger $\delta$  in Eq. \ref{eq:resolution_crit}). 

\begin{figure}
\includegraphics[width=1\textwidth]{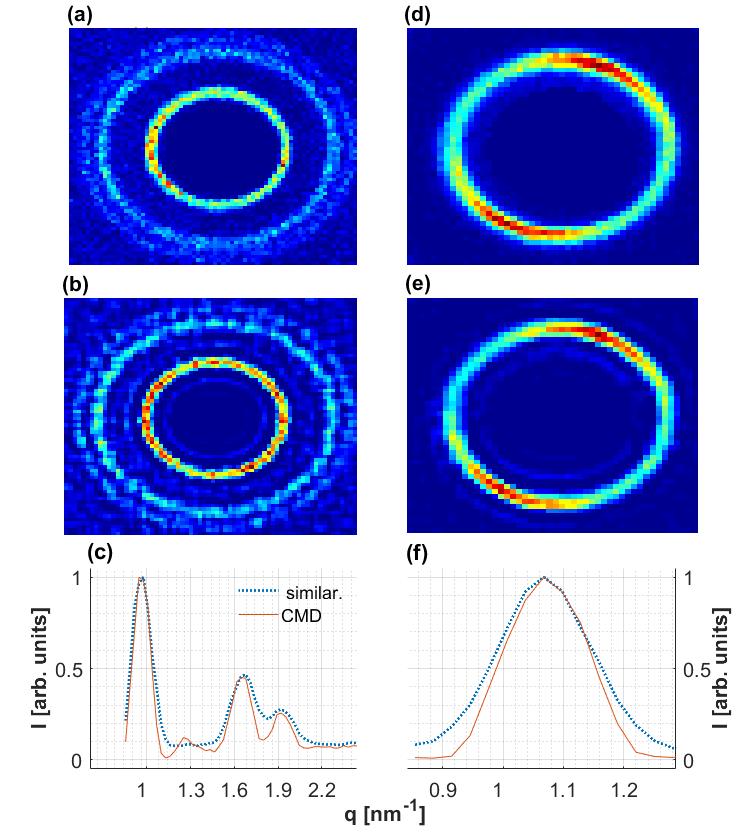}
\caption{CMD algorithm applied to measured data: (a-c) DOPE, (d-f) AgBh samples. 
(a, d) Measurements with PSF size of $0.6\times 0.6 mm^2$ with 3 minute exposure time. 
(b, e) CMD reconstruction using 6 different PSFs each with 10 seconds exposure time.
(c, f) 1D azimuthal integration SAXS signal, presenting enhanced resolution.}
\label{fig:true_CMD}
\end{figure}

Furthermore, we studied the CMD algorithm limitations by simulating two concentric scattering peaks (as described in Sec. \ref{Synthetic_data_generation}), with widths of 0.7 pixels. First, we fixed one of the scattering rings to a radius of 30 pixels and reconstructed using CMD when the second ring was varied between 31 to 50 pixels (i.e., $\Delta q = 1 - 20$). Using our separation  criteria (Eq. \ref{eq:resolution_crit}), we find that the CMD algorithm approach the theoretical limits (dashed line) for $\Delta q$ larger than 9 pixels, and with reasonable reconstruction resolution for $\Delta q$ larger than 4 pixels (Fig. \ref{fig:separation_boundry}).  Importantly, the deviation from the theoretical $\delta(\Delta q \rightarrow 0)$ is unphysical, since, for a practical system with finite pixel size, all reconstructions attempts will fail below the separation of several pixels. Reconstructions with various inner circle radii (10, 30, and 50 pixels) resulted in similar conclusions.

\begin{figure}
\includegraphics[width=1\textwidth]{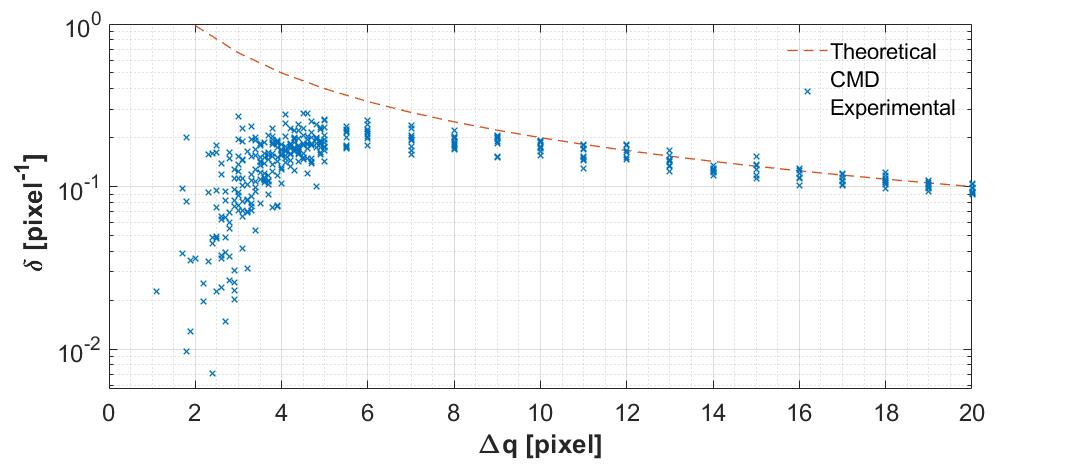}
\caption{CMD's resolution capabilities for separating nearby scattering rings. Separation  criteria (Eq. \ref{eq:resolution_crit}), $\delta$, for various distance between the rings ($\Delta q$). Red dashed-line, is the theoretical value of $\delta$ calculated directly from the GT image.}
\label{fig:separation_boundry}
\end{figure}

Last, we examine how the number of different PSFs influences reconstruction capabilities. For a given number of PSFs used by the CMD, we evaluated the best performing subset of PSFs (out of a pool of 16 different PSFs). As shown in Fig. \ref{fig:CMD_params}c, increasing the number of measurements results in better reconstructions. However, increasing the set size comes at the expense of increasing the total measurement time. Thus a harsh comparison is an equivalent "exposure time" without any PSF blurring (Figure \ref{fig:CMD_params}c, dashed lines). Our results clearly demonstrate the supremacy of the CMD reconstructions over a large parameter set, and in particular, for short total exposure times, where our reconstruction error is significantly smaller than the unrealistic point-like PSF. As exposure time increases, CMD reconstructions converge to the optimal point-like PSF.  

Motivated by our simulated data, we proceed on measuring the above-mentioned samples and evaluating the CMD performance (Fig. \ref{fig:true_CMD}). For PSF details see Appendix \ref{app:measured_PSF}. In order to optimize the reconstruction, we conducted a grid search over the expected $\lambda$ and $\sigma$ values of similar patterns studied with synthetic data. We find that $\lambda=0.04$ and normal distribution probability function of $\sigma$ centered about 70\% of the mean intensity of the smallest PSF used and with a width of 20\% larger.

The CMD results show significant improvement in separation criteria from $\delta=0.57$ in the measured image (Fig. \ref{fig:true_CMD}a) to $\delta=1.15$ in the reconstructed image at the output of CMD (Fig. \ref{fig:true_CMD}b). The comparison was made to equivalent scattering time with the smallest PSF used by the CMD. The reconstruction is particularly encouraging, enabling to distinguish between neighboring scattering peaks.

\subsection{Super-resolution SAXS Reconstruction}
We are now in a position to combine the two aforementioned resolution enhancement approaches to reconstructed super-resolution SAXS (srSAXS) 2D images. The algorithms and examples are deposited in public repository \cite{codeGitHub}. Given that we took $m$ images with different PSF, and we have done so with resolution enhancement factor $f$, the simplified solution is to perform SR and CMD procedures separately.

Both from a physical and numerical point of view, the SR procedure should take place first. Since the blurring arises from the not-point-wise nature of the source, the blurring occurs before the decimation, occurring on the camera. Hence, the reconstruction should be in the reverse order. From a numerical point of view, performing the CMD first corrupts the sub-pixel translations, and therefore the registration step in the SR procedure is unable the align the images to one another. This was indeed confirmed in numerical evaluations.

In Fig. \ref{fig:srsaxs_sim} we demonstrate the added value of the full srSAXS approach on simulated data having two nearby peaks, which are inseparable at the decimated blurred image. However, post-processing the full srSAXS algorithm, the nearby peaks can be clearly identified with an added resolution to the combined SR and CMD approach $\delta_{srSAXS} = 0.026$. Remarkably, the srSAXS reconstruction is superior to the images simulated with the smallest PSF and equivalent total exposure time used $\delta_{Similar}=0.022$ (Fig. \ref{fig:srsaxs_sim}, gray dashed line).    

\begin{figure}
\includegraphics[width=1\textwidth]{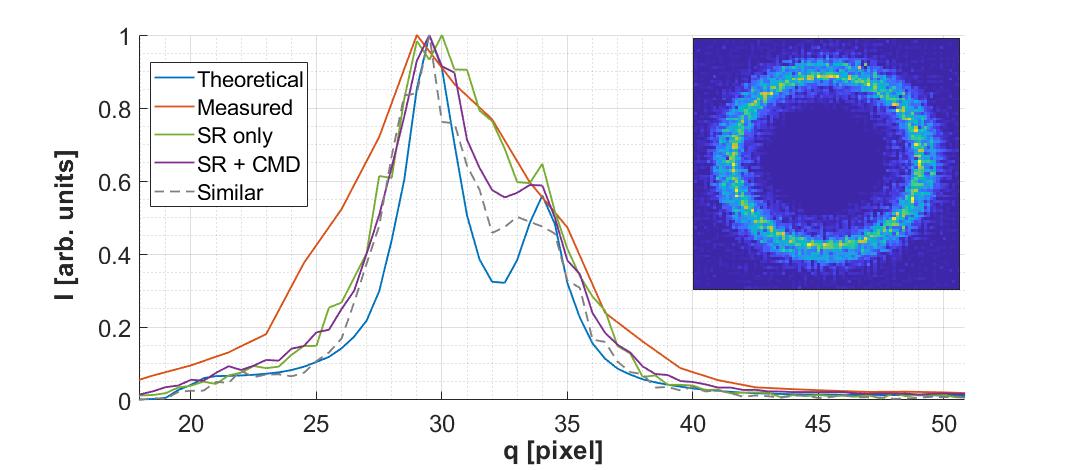}
\caption{SrSAXS reconstruction on simulated data. 
1D azimuthal integration of synthetic data simulating two close circles. The addition of CMD (purple line) improves the reconstruction resolution over the SR data alone (green line) and even the SR data applied on the finest PSF (gray dashed line). 2D reconstruction is presented in the inset.}
\label{fig:srsaxs_sim} 
\end{figure}

The combined srSAXS approach is further applied to our experimental data. In Fig. \ref{fig:true_data_srSAXS} we present our srSAXS reconstructions for the scattering data of DOPE and AgBh measured at 9 different subpixel positions ($f=3$) and using $m=6$ different PSFs. These results manifest the added value of the techniques combined. We do notice that the CMD produces additional satellite minors peaks. One way of solving this is by choosing other regularizing method, such as L1 or some Tikhonov Regularization \cite{Golub:2013}. Our experiments showed different results in this aspect but not with great significance and this remains to be further researched.
Evaluating the separation criteria for DOPE (between $q_{H(1,1)}$ and $q_{H(2,0)}$ peaks) we find $\delta_{SR}=0.48$, and $\delta_{srSAXS}=1.22$, for SR only and full srSAXS reconstructions, respectively. In comparison, for similar exposure time using the smallest PSF ($ 0.6  \times 0.6 mm^2$) used by the CMD, we find $\delta_{similar}=0.59$. For smallest slits opening of $ 0.2 \times 0.2 mm^2$, with twice the total exposure time  $\delta_{GT}=1.21$, comparable to our reconstructed capabilities. Similar enhanced resolution is also demonstrated for the AgBh data where the (001) peak width shrinks for $\sim 20\%$ using our methodology.

\begin{figure}
\includegraphics[width=1\textwidth]{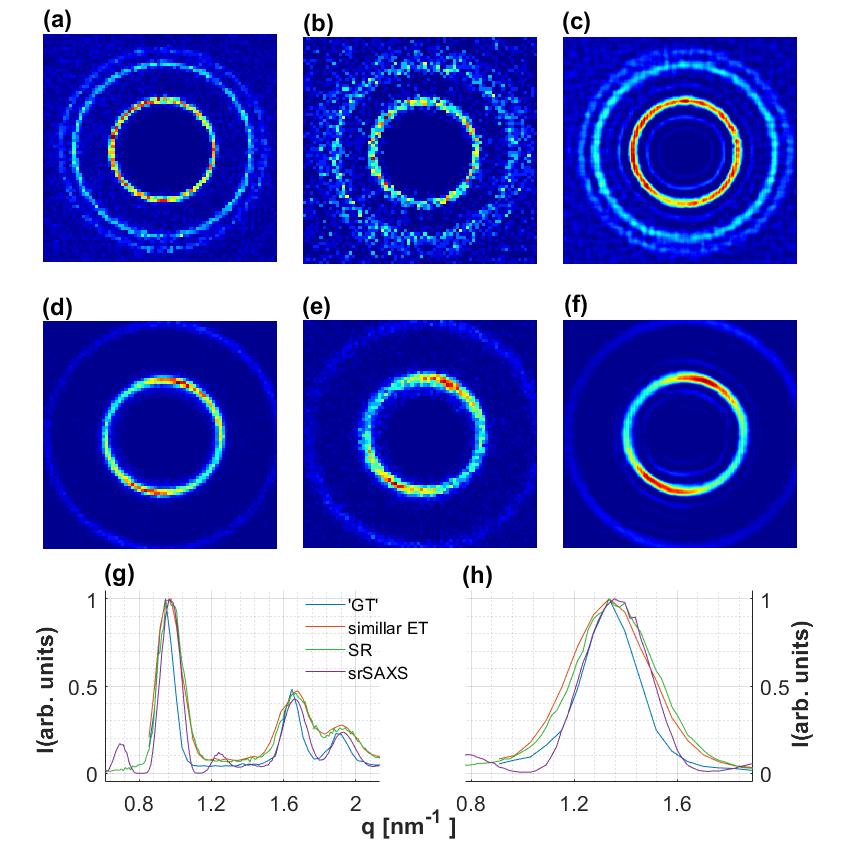}
\caption{srSAXS measurements and reconstruction for (a-c, g) DOPE and (d-f, h) AgBh. 
(a, d) "GT" image taken with exposure time of 15 minutes and the finest PSF ($0.2 \times 0.2 mm^2$).
(b, e) Measurements with exposure time of 10 sec and PSF of ($0.6 \times 0.6 mm^2$).
(c, f) srSAXS reconstructed images with f=3 and 6 different PSFs (detailed in Appendix \ref{app:measured_PSF}).
(g, h) 1D azimuthal integration signals. Blue lines are for "GT" as in (a, d), red lines are measurements of similar exposure time ($9\times 6 \times 10$ seconds), green lines are SR reconstructions, and purple lines are full srSAXS reconstructions.}
\label{fig:true_data_srSAXS}
\end{figure}

Finally, we compared our srSAXS algorithms with other reconstruction techniques applied to the measured data (DOPE, see Fig. \ref{fig:comparison}). As a common deconvolution technique, we implemented Richardson-Lucy (RL) algorithm  \cite{Lucy:1974,Richardson:1972} and gained $\delta_{RL}=1.19$  (Fig. \ref{fig:comparison}b and red line in Fig. \ref{fig:comparison}e). Additional comparison is presented using the full implementation of Farsiu et. al algorithm (FA) including the proposed deconvolution \cite{Farsiu:2004} (Fig. \ref{fig:comparison}d and green line in Fig. \ref{fig:comparison}e). The FA  results in $\delta_{FA}=1.17$ with low background signal. Much longer exposure time with minimal PSF has very good performance due to high SNR and negligible blur. However, among the different algorithms examined, our srSAXS showed better performance in separating nearby features with $\delta_{srSAXS}=1.22$ (Fig. \ref{fig:comparison}c and purple line in Fig. \ref{fig:comparison}e).

\begin{figure}
\includegraphics[width=0.8\textwidth]{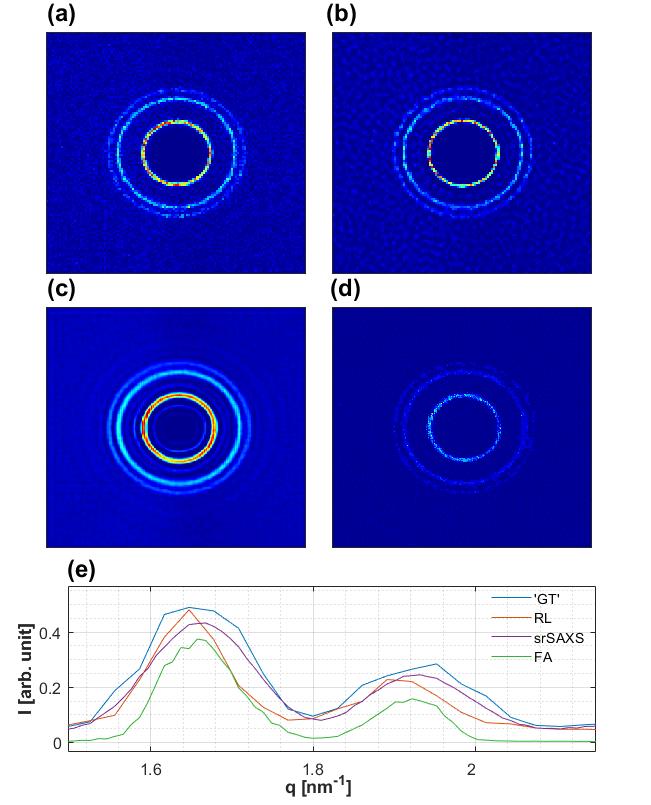}
\caption{Comparison of different reconstruction techniques.
(a) 'Ground truth' approximation from 15 minutes exposure time with $0.2 \times 0.2 mm^2$ PSF, (b) Richardson-Lucy deconvolution technique, using the $0.6 \times 0.6 mm^2$ PSF and 9 minute exposure time. (c) Our srSAXS technique with $6 \times 9 \times 10 seconds$ total exposure time (as in Fig. \ref{fig:true_data_srSAXS}c). (d) Full implementation of Farsiu algorithm (FA) \cite{Farsiu:2004} using the $0.6 \times 0.6 mm^2$ PSF and $9 \times 1$ minutes exposure time. (e) 1D azimuthal integration.}
\label{fig:comparison}
t\end{figure}

\section{Summary}

We demonstrate a new computationally and technically efficient method that significantly enhances SAXS angular resolution. For a limited photon flux, as in the case of lab-based systems, and a limited total experimental time, the recorded SAXS resolution is limited by low SNR. We demonstrate, both on synthetic and experimental measured data, two resolution enhancement procedures. In the first method, super-resolution is achieved by measuring the scattering signal for altered sub-pixel positions of the detector. For the second method, several exposures are taken using different PSFs. Each of the techniques resulted in enhanced resolution, while the best performing reconstruction is achieved when both techniques are applied one after the other.

\section{Acknowledgements}

We thank Guy Jacoby, and Micha Kornriech for helpful discussions and assistance with the measurements. We acknowledge the support of Multi-Dimensional Meteorology (MDM) consortium by the Israel Innovation Authority, and the Israeli Science Foundation (award number 550/15).

\bibstyle{iucr}
\referencelist{}

\appendix

\section{Synthetic PSF}
\label{app:syn_PSF}
As described in Section \ref{Synthetic_data_generation}, the blurring PSF were modeled with two error-functions at each axis:

\begin{equation} \label{eq:synt_PSF}
   f(x) = erf(\frac{a_1+x}{a_2}) + erf(\frac{a_1-x}{a_2}),
\end{equation}
where $a_1$ is a width parameter, $a_2$ is the 'sharpness' parameter and $x$ belong to a symmetrical interval around 0. This 'beam profile' was applied in $x$ and $y$ axes independently to produce 2D PSFs.
Throughout the paper, four such profiles were simulated, resulting in 16 different PSFs, which can be found in Fig \ref{fig:syn_PSF}. The parameters used to generate these PSFs are summarized in Table \ref{table:PSF}, with $x \in [-8,8]$ equaly spaced with 19 points.

\begin{table}
\centering
\label{table:PSF}
\begin{tabular}{l|l} 

 $a_1$ & $a_2$ \\ [0.5ex] 
 0.1 & 1  \\ 

 1.6 & 1.5  \\

 3.1 & 2 \\

 4.6 & 2.5 \\ 

\end{tabular} \\

\caption{Parameters used to generate synthetic PSF using Eq. \ref{eq:synt_PSF}}
\end{table}

\begin{figure}
\includegraphics[width=0.5\textwidth]{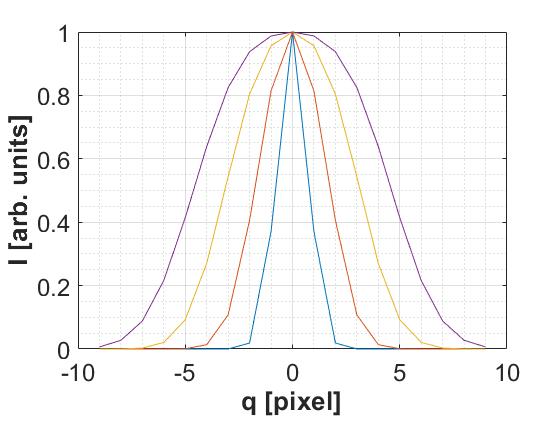}
\caption{Different profiles used for generating synthetic PSF}
\label{fig:syn_PSF}
\end{figure}

\section{Conjugate gradients details and mathematical proof}
\label{apn:conjugate_proff}
Using the notation presented in section \ref{sec:CMD}, we will show that the conjugate gradients method can be applied to our reconstruction problem with specific optimum conditions. Below, we find $\mathcal{P}^T$, later to be used in a conventional conjugate gradient method.

\begin{definition} 
\label{def:rot2}
Let $A\in\mathbb{R}^{n \times m}$, then $[\breve{A}]_{i, j} =: [A]_{n-i, m-j}$
\end{definition}

\begin{lemma}
Defining $\mathcal{P} = [P_1 * \quad P_2 * \quad...\quad P_m* \quad \sqrt{\lambda}I]^T$, $\mathcal{P}^T$ takes the form of $\mathcal{P}^T\mathcal{Y} =: \mathcal{Y}^T * \breve{\mathcal{P}}$
\end{lemma}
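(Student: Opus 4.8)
The plan is to establish the adjoint formula by working directly from the definition of the Frobenius inner product $\langle A, B\rangle_F = \sum_{i,j} A_{i,j} B_{i,j}$ and the defining property of the conjugate operator, namely $\langle \mathcal{P}X, \mathcal{Y}\rangle_F = \langle X, \mathcal{P}^T\mathcal{Y}\rangle_F$ for all images $X$ and all stacked tuples $\mathcal{Y} = [Y_1\ \cdots\ Y_m\ Y_{m+1}]^T$. Since $\mathcal{P}$ is block-structured, the inner product on the stacked space splits as a sum: $\langle \mathcal{P}X, \mathcal{Y}\rangle_F = \sum_{i=1}^m \langle P_i * X, Y_i\rangle_F + \sqrt{\lambda}\,\langle X, Y_{m+1}\rangle_F$. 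The whole problem therefore reduces to a single clean claim: for a fixed kernel $P$, the adjoint of the convolution operator $X \mapsto P * X$ with respect to $\langle\cdot,\cdot\rangle_F$ is $Y \mapsto Y * \breve{P}$, where $\breve{P}$ is the $180^\circ$ rotation (flip in both indices) from Definition \ref{def:rot2}.

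The key steps, in order, are: (i) Write out $\langle P * X, Y\rangle_F = \sum_{i,j}\big(\sum_{k,l} P_{i-k,j-l}X_{k,l}\big)Y_{i,j}$, using the standard 2D convolution formula. (ii) Interchange the (finite) sums to collect the coefficient of each $X_{k,l}$, obtaining $\sum_{k,l} X_{k,l}\big(\sum_{i,j} P_{i-k,j-l}Y_{i,j}\big)$. (iii) Recognize the inner bracket as a correlation of $Y$ against $P$ evaluated at $(k,l)$, and rewrite it as a convolution of $Y$ with the flipped kernel: substituting $P_{i-k,j-l} = [\breve P]_{\,k-i,\,l-j}$ up to the index-reversal convention fixed in Definition \ref{def:rot2}, the bracket becomes $(Y * \breve P)_{k,l}$. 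Hence $\langle P*X, Y\rangle_F = \langle X, Y * \breve P\rangle_F$, which is exactly the claimed adjoint. (iv) Reassemble the blocks: $\mathcal{P}^T\mathcal{Y} = \sum_{i=1}^m Y_i * \breve{P_i} + \sqrt{\lambda}\,Y_{m+1}$, and observe this is precisely what the compact notation $\mathcal{Y}^T * \breve{\mathcal{P}}$ abbreviates (the $\sqrt{\lambda}I$ block contributes $\sqrt{\lambda}\,Y_{m+1}$ since $I$ is its own flip, and the zero block in $\mathcal{Y}$ is why we may write $Y_{m+1}=0$ in the least-squares setting).

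The main obstacle is purely bookkeeping rather than conceptual: one must be careful and consistent about the index-reversal convention in Definition \ref{def:rot2} (which uses $[\breve A]_{i,j} = [A]_{n-i,m-j}$, i.e. reflection through the center of an $n\times m$ array) and about boundary effects, since discrete convolution changes array dimensions — $P_i * X$ lives in a larger array than $X$, so the "transpose" direction shrinks it back. I would handle this by treating all arrays as finitely supported functions on $\mathbb{Z}^2$ (zero-padded), which makes the sum manipulations in steps (i)–(iii) unambiguous and the flip identity $P_{i-k,j-l} = [\breve P]_{k-i+c,\,l-j+c'}$ transparent up to a fixed shift absorbed into the convention; the dimension counts then follow from tracking supports. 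A secondary, minor point is to note that the operator $\mathcal{P}^T$ so defined indeed sends stacked tuples back to single images, so that the composition $\mathcal{P}^T D^2 \mathcal{P}$ in Eq. \ref{eq:normal_equation} is a well-defined self-adjoint positive-definite operator on image space, which is what licenses the conjugate gradients solver.
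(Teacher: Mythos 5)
Your proposal is correct and follows essentially the same route as the paper's proof: expand the Frobenius inner product of $P*X$ with $Y$, interchange the finite sums to collect the coefficient of each $X_{k,l}$, and recognize the inner sum as convolution of $Y$ with the flipped kernel $\breve{P}$, then extend blockwise to general $m$ (which the paper dismisses as trivial but you carry out, including the $\sqrt{\lambda}I$ block). Your added care about zero-padding, the index-reversal convention, and the well-definedness of $\mathcal{P}^T D^2 \mathcal{P}$ is a refinement of, not a departure from, the paper's argument.
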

 
\begin{proof}
We prove the above for $m = 1$ and the extension to a general $m$ is trivial. recalling that $||x||_F^2 = \langle x, x \rangle_F = Tr(x^T x) = \sum_i [x^T x]_{ii}$,  we need to show that
$\sum_i [(\mathcal{P} x)^T  \mathcal{Y}]_{ii} = \langle \mathcal{P} x,  \mathcal{Y} \rangle_F = \langle x, \mathcal{P}^T \mathcal{Y} \rangle_F = \sum_i [x^T  (\mathcal{Y}^T * \breve{\mathcal{P}})]_{ii}$.

Therefore, 

$\langle \mathcal{P} X , \mathcal{Y} \rangle_F = 
\sum_i [(\mathcal{P} X)^T \mathcal{Y}]_{ii} = 
\sum_i (\sum_j [\mathcal{P} X]^T_{ij} Y_{ji})= 
\sum_i (\sum_j [P *  X]^T_{ij} Y_{ji}) = 
\sum_{ij} [P * X]_{ji} Y_{ji} = 
\sum_{ij} [ \sum_{kl} P_{k-j,l-i}  X_{kl}] Y_{ji} = 
\sum_{ij}  \sum_{kl} P_{k-j,l-i} Y_{ji} X_{kl} = 
\sum_{kl} \sum_{ij} \breve{P}_{j-k,i-l} Y_{ji} X_{kl} = 
\sum_{kl} X_{kl} [\sum_{ij} \breve{P}_{j-k,i-l} Y_{ji}] = 
\sum_{kl} X_{kl}  [ \breve{P} * Y]_{kl} =
\sum_l (\sum_k [{X_{lk}}]^T  [Y * \breve{P}]_{kl}) = 
\sum_l [X^T ( \mathcal{\breve{P}} \mathcal{Y})]_{ll} = 
\langle X ,  \mathcal{P}^T \mathcal{Y} \rangle_F $

\end{proof}
Applying the above to a standard conjugate gradients algorithm finds the optimal $X$ in Eq. \ref{eq:normal_equation}.

\section{Measured PSF}
\label{app:measured_PSF}
Conducting measurements included measuring the AgBh and DOPE samples as well as the PSF used for these measurements.
Like the samples, the PSF were measured 9 times with translations in 2 axes (f=3). The PSFs were determined by the following slit size (measured in $mm^2$) : \\
$0.2\times0.2$ ;
$0.6\times0.6$ ; $06\times0.8$ ; $0.6\times1$
$0.8\times0.6$ ; $08\times0.8$ ; $0.8\times1$

For each sample, we measured an approximation to the ground-truth using the finest PSF ($0.2\times0.2$) with longer exposure times (15 minutes). The rest of the scattering images were measured for 10 seconds intervals.


@article{Pedersen:1990el,
author = {J. S. Pedersen, and D. Posselt, and K. Mortensen},
title = {Analytical treatment of the resolution function for small-angle scattering},
journal = {Journal of Applied Crystallography},
year = {1990},
volume = {23},
number = {4},
pages = {321-333},
month = aug
}

@article{Vad:2010hu,
author = {T. Vad, and W.F.C. Sager, and J. Zhang, and J. Buitenhuis, and A. Radulescu},
title = {Experimental determination of resolution function parameters from small-angle neutron scattering data of a colloidal SiO2 dispersion},
journal = {Journal of Applied Crystallography},
year = {2010},
volume = {43},
number = {4},
pages = {686-692},
month = aug
}

@article{Li:2008cx,
author = {Li, Y and Beck, R and Huang, T and Choi, M C and Divinagracia, M},
title = {Scatterless hybrid metal{\textendash}single-crystal slit for small-angle X-ray scattering and high-resolution X-ray diffraction},
journal = {Journal of Applied Crystallography},
year = {2008},
volume = {41},
number = {6},
pages = {1134--1139},
month = dec
}

@article{Vad:ks5259,
author={Thomas Vad and W. Sager},
title = "{Comparison of iterative desmearing procedures for one-dimensional small-angle scattering data}",
journal = "Journal of Applied Crystallography",
year = "2011",
volume = "44",
number = "1",
pages = "32--42",
month = "Feb",
abstract = {Two simple iterative desmearing procedures {--} the Lake algorithm and the Van Cittert method {--} have been investigated by introducing different convergence criteria using both synthetic and experimental small-angle neutron scattering data. Implementing appropriate convergence criteria resulted in stable and reliable solutions in correcting resolution errors originating from instrumental smearing, {\it i.e.} finite collimation and polychromaticity of the incident beam. Deviations at small momentum transfer for concentrated ensembles of spheres encountered in earlier studies are not observed. Amplification of statistical errors can be reduced by applying a noise filter after desmearing. In most cases investigated, the modified Lake algorithm yields better results with a significantly smaller number of iterations and is, therefore, suitable for automated desmearing of large numbers of data sets.},
keywords = {small-angle scattering, instrumental resolution, desmearing},
}

@article{Hua_2017,
   title={Microfocus small-angle X-ray scattering at SSRF BL16B1},
   volume={41},
   ISSN={1674-1137},
   number={4},
   journal={Chinese Physics C},
   publisher={IOP Publishing},
   author={Hua, Wen-Qiang and Wang, Yu-Zhu and Zhou, Ping and Hu, Tao and Li, Xiu-Hong and Bian, Feng-Gang and Wang, Jie},
   year={2017},
   month={Apr},
   pages={048001}
}

@article{Farsiu:2004 , 
author={S. {Farsiu}, M. D. {Robinson}, M. {Elad}, and P. {Milanfar}},
journal={IEEE Transactions on Image Processing},
title={Fast and robust multiframe super resolution},
year={2004},
volume={13},
number={10},
pages={1327-1344},
keywords={image resolution;image reconstruction;motion estimation;noise;minimisation;robust multiframe super resolution;super-resolution reconstruction;high-resolution images;motion estimation;blur estimation;sharp edges;Image resolution;Noise robustness;Layout;Cameras;Spatial resolution;Lenses;Frequency domain analysis;Image reconstruction;Motion estimation;Computational modeling;Algorithms;Computer Graphics;Computer Simulation;Image Enhancement;Image Interpretation, Computer-Assisted;Information Storage and Retrieval;Numerical Analysis, Computer-Assisted;Pattern Recognition, Automated;Reproducibility of Results;Sensitivity and Specificity;Signal Processing, Computer-Assisted;Subtraction Technique},
doi={10.1109/TIP.2004.834669},
ISSN={1941-0042},
month={Oct},
}

@book{Golub:2013,
  author = "Gene H. Golub, and Charles F. {Van Loan}",
  title  =  "Matrix Computations",
  publisher = {The Johns Hopkins University Press},
  year = 2013  ,       
  edition = {4th}
}

@article{Lucas:1981, 
    title={An iterative Image Registration Technique with an Application to Stereo Vision}, 
    journal={Proceedings of Imaging Understanding Workshop},
   author={Bruce D. Lucas, and Takeo Kanade},
    pages={121-130},
    year = {1981} 
}

@article{KORNREICH2013716,
title = "Modern X-ray scattering studies of complex biological systems",
journal = "Current Opinion in Biotechnology",
volume = "24",
number = "4",
pages = "716-723",
year = "2013",
author = "M. Kornreich and R. Avinery and R. Beck",

}

@article{UR1992181,
title = "Improved resolution from subpixel shifted pictures",
journal = "CVGIP: Graphical Models and Image Processing",
volume = "54",
number = "2",
pages = "181 - 186",
year = "1992",
issn = "1049-9652",
author = "Hanoch Ur and Daniel Gross",
abstract = "In this paper we consider the problem of obtaining a picture with improved resolution from an ensemble of K spatially shifted low-resolution pictures. We show that it can be done and provide a procedure for its implementations. The framework and scheme utilized in the generalized multichannel sampling theorem of Papoulis and Brown was found to be an appropriate representation for our problem. The linear channels of this scheme correspond to the processes that produce the low-resolution pictures (by blur and shifts) and to the process of reconstruction. The total process is shown to consist of two distinct degradation-restoration pairs: blurring-deblurring and undersampling-merging. This paper discusses merging. Deblurring (deconvolution) can be done by any desired scheme independent of merging. It is shown that the procedure achieves the claims of super resolution schemes but, unlike many of them, is well posed and robust. This procedure holds for a selected ensemble so that K is not too large and no two of the pictures practially coincide. The number of multiplications per pixel of the improved picture can be reduced to O(K2)."
}

@article{Richardson:1972, 
    title={Bayesian-Based Iterative Method of Image Restoration}, 
    journal={Journal of the Optical Society in America},
    author={Richardson, William Hadley},
    pages={55-59},
    year = 1972   
}

@article{Lucy:1974, 
    title={An iterative technique for the rectification of observed distributions}, 
    journal={Astronomical Journal},
    author={Lucy, Leon},
    pages={745–754},
    year = 1974   
}

@article{blanchet2013small,
  title={Small-angle X-ray scattering on biological macromolecules and nanocomposites in solution},
  author={Blanchet, Clement E and Svergun, Dmitri I},
  journal={Annual review of physical chemistry},
  volume={64},
  pages={37-54},
  year={2013},
  publisher={Annual Reviews}
}

@article{Ben-Nun2010,
abstract = {In this paper, the analysis of several involved models, relevant for evaluating solution X-ray scattering form factors of supramolecular self-assembled structures, is presented. Different geometrical models are discussed, and the scattering form factors of several layers of those shapes are evaluated. The thickness and the electron density of each layer are parameters in those models. The models include Gaussian electron density profiles and/or uniform electron density profiles at each layer. Various forms of cuboid, layered, spherical, cylindrical, and helical structures are carefully treated. The orientation-averaged scattering intensities of those form factors are calculated. Similar classes of form factors are examined and compared, and their fit to scattering data of lipid bilayers, capsids of the Simian virus 40 virus-like particle and microtubule is discussed. A more detailed model of discrete helices composed of uniform spheres was derived and compared to solution X-ray scattering data of microtubules. Our analyses show that when high-resolution data are available the more detailed models with Gaussian electron density profiles or helical structures composed of spheres should be used to better capture all the elements in the scattering curves. The models presented in this paper may also be applied, with minor corrections, for the analysis of solution neutron scattering data.},
annote = {From Duplicate 5 ( X+: a comprehensive computationally accelerated structure analysis tool for solution X-ray scattering from supramolecular self-assemblies - Ben-Nun, Tal; Ginsburg, Avi; Szekely, Pablo; Raviv, Uri; Sz{\'{e}}kely, Pablo )
And Duplicate 9 ( Solution X-ray scattering form factors of supramolecular self-assembled structures. - Sz{\'{e}}kely, Pablo; Ginsburg, Avi; Ben-Nun, Tal; Raviv, Uri )
},
author = {Sz{\'{e}}kely, Pablo and Ginsburg, Avi and Ben-Nun, Tal and Raviv, Uri},
issn = {1520-5827},
journal = {Langmuir},
keywords = {Microspheres,Microtubules,Microtubules: chemistry,Models,Radiation,Scattering,Theoretical,X-Rays},
month = {aug},
number = {16},
pages = {13110--29},
publisher = {International Union of Crystallography},
title = {{Solution X-ray scattering form factors of supramolecular self-assembled structures.}},

volume = {26},
year = {2010}
}

@article{Safinya2013,
abstract = {In this paper, which is part of a collection in honor of Noel Clark's remarkable career on liquid crystal and soft matter research, we present examples of biologically inspired systems, which form liquid crystal (LC) phases with their LC nature impacting biological function in cells or being important in biomedical applications. One area focuses on understanding network and bundle formation of cytoskeletal polyampholytes (filamentous-actin, microtubules, and neurofilaments). Here, we describe studies on neurofilaments (NFs), the intermediate filaments of neurons, which form open network nematic liquid crystal hydrogels in axons. Synchrotron small-angle-x-ray scattering studies of NF-protein dilution experiments and NF hydrogels subjected to osmotic stress show that neurofilament networks are stabilized by competing long-range repulsion and attractions mediated by the neurofilament's polyampholytic sidearms. The attractions are present both at very large interfilament spacings, in the weak sidearm-interpenetrating regime, and at smaller interfilament spacings, in the strong sidearm-interpenetrating regime. A second series of experiments will describe the structure and properties of cationic liposomes (CLs) complexed with nucleic acids (NAs). CL-NA complexes form liquid crystalline phases, which interact in a structure-dependent manner with cellular membranes enabling the design of complexes for efficient delivery of nucleic acid (DNA, RNA) in therapeutic applications.},
author = {Safinya, Cyrus R and Deek, Joanna and Beck, Roy and Jones, Jayna B and Leal, Cecilia and Ewert, Kai K and Li, Youli},
issn = {0267-8292},
journal = {Liquid crystals},
month = {jan},
number = {12},
pages = {1748--1758},
pmid = {24558293},
publisher = {Taylor {\&} Francis},
title = {{Liquid crystal assemblies in biologically inspired systems.}},
volume = {40},
year = {2013}
}

@article{Jacoby2015,
author = {Jacoby, Guy and Cohen, Keren and Barkan, Kobi and Talmon, Yeshayahu and Peer, Dan and Beck, Roy},
doi = {10.1038/srep09481},
issn = {2045-2322},
journal = {Scientific Reports},
month = {mar},
pages = {9481},
title = {{Metastability in lipid based particles exhibits temporally deterministic and controllable behavior}},
volume = {5},
year = {2015}
}

@article{Lipfert,
author = {Lipfert, Jan and Doniach, Sebastian},
title = {Small-Angle X-Ray Scattering from RNA, Proteins, and Protein Complexes},
journal = {Annual Review of Biophysics and Biomolecular Structure},
volume = {36},
number = {1},
pages = {307-327},
year = {2007},
}

@misc{codeGitHub,
  author = {Benjamin Gutman},
  title = {SrSAXS algorithms},
  year = 2020,
  url = {https://github.com/BenjaminGutman/CMD},
  urldate = {2020-02-10}
}
\end{document}